\newtheorem{theorem}{Theorem}
\newtheorem{lemma}[theorem]{Lemma}
\def \CC   {{\cal C}}
\def \SS   {{\cal S}}
\begin{document}
%

\conferenceinfo{SPAA'05,} {July 18--20, 2005, Las Vegas, Nevada,
USA.}

\CopyrightYear{2005}

\crdata{1-58113-986-1/05/0007}


\title{Admission Control to Minimize Rejections and
Online Set Cover with Repetitions}
%
%

\numberofauthors{3}
%

\author{
%
\alignauthor Noga Alon \titlenote{Research supported in part by a
grant from the Israel Science Foundation, and by the Hermann
Minkowski Minerva Center for Geometry at Tel Aviv University.}\\
       \affaddr{Schools of Mathematics and Computer Science}\\
       \affaddr{Tel-Aviv University}\\
       \affaddr{Tel-Aviv, 69978, Israel}\\
       \email{noga@math.tau.ac.il}
\alignauthor Yossi Azar \titlenote{Research supported in part by
the Israel Science Foundation and by the German-Israeli Foundation.}\\
       \affaddr{School of Computer Science}\\
       \affaddr{Tel-Aviv University}\\
       \affaddr{Tel-Aviv, 69978, Israel}\\
       \email{azar@tau.ac.il}
\alignauthor Shai Gutner \titlenote{This paper forms part of a
Ph.D. thesis written by the author under the supervision of Prof.
N. Alon and Prof. Y. Azar in Tel Aviv University.}\\
       \affaddr{School of Computer Science}\\
       \affaddr{Tel-Aviv University}\\
       \affaddr{Tel-Aviv, 69978, Israel}\\
       \email{gutner@tau.ac.il}
}
\maketitle

\begin{abstract}

We study the admission control problem in general networks.
Communication requests arrive over time, and the online algorithm
accepts or rejects each request while maintaining the capacity
limitations of the network. The admission control problem has been
usually analyzed as a benefit problem, where the goal is to devise
an online algorithm that accepts the maximum number of requests
possible. The problem with this objective function is that even
algorithms with optimal competitive ratios may reject almost all
of the requests, when it would have been possible to reject only a
few. This could be inappropriate for settings in which rejections
are intended to be rare events.

In this paper, we consider preemptive online algorithms whose goal
is to minimize the number of rejected requests. Each request
arrives together with the path it should be routed on. We show an
$O(\log^2 (mc))$-competitive randomized algorithm for the weighted
case, where $m$ is the number of edges in the graph and $c$ is the
maximum edge capacity. For the unweighted case, we give an $O(\log
m \log c)$-competitive randomized algorithm. This settles an open
question of Blum, Kalai and Kleinberg raised in \cite{BlKaKl01}.
We note that allowing preemption and handling requests with given
paths are essential for avoiding trivial lower bounds.

The admission control problem is a generalization of the online
set cover with repetitions problem, whose input is a family of $m$
subsets of a ground set of $n$ elements. Elements of the ground
set are given to the online algorithm one by one, possibly
requesting each element a multiple number of times. (If each
element arrives at most once, this corresponds to the online set
cover problem.) The algorithm must cover each element by different
subsets, according to the number of times it has been requested.

We give an $O(\log m \log n)$-competitive randomized algorithm for
the online set cover with repetitions problem. This matches a
recent lower bound of $\Omega(\log m \log n)$ given by Feige and
Korman for the competitive ratio of any randomized {\em
polynomial} time algorithm, under the $BPP \neq NP$ assumption.
Given any constant $\epsilon > 0$, an $O(\log m \log
n)$-competitive deterministic bicriteria algorithm is shown that
covers each element by at least $(1-\epsilon)k$ sets, where $k$ is
the number of times the element is covered by the optimal
solution.


\end{abstract}

\category{C.2.2}{Computer-Communication Networks}{Network
Protocols}[Routing protocols] \category{F.2.2}{Analysis of
Algorithms and Problem Complexity}{Nonnumerical Algorithms and
Problems}


\terms{Algorithms, Theory.}

\keywords{On-line, Competitive, Admission control, Set Cover.}

\section{Introduction}\label{sec:intro}

We study the admission control problem in general graphs with edge
capacities. An online algorithm can receive a sequence of
communications requests on a virtual path, that may be accepted or
rejected, while staying within the capacity limitations.

This problem has typically been studied as a benefit problem. This
means that the online algorithm has to be competitive with respect
to the number of accepted requests. A problem with this objective
function is that in some cases an online algorithm with a good
competitive ratio may reject the vast majority of the requests,
whereas the optimal solution rejects only a small fraction of
them.

In this paper we consider the goal of minimizing the number of
rejected requests, which was first studied in \cite{BlKaKl01}.
This approach is suitable for applications in which rejections are
intended to be rare events. A situation in which a significant
fraction of the requests is rejected even by the optimal solution
means that the network needs to be upgraded.

We consider preemptive online algorithms for the admission control
problem. Allowing preemption is necessary for achieving reasonable
bounds for the competitive ratio. Each request arrives together
with the path it should be routed on. The admission control
algorithm decides whether to accept or reject it. An online
algorithm for both admission control and routing easily admits a
trivial lower bound \cite{BlKaKl01}.

\textbf{The admission control to minimize rejections problem.} We
now formally define the admission control problem. The input
consist of the following:
\begin{itemize}
    \item A directed graph $G=(V,E)$, where $|E|=m$.
    Each edge $e$ has an integer capacity $c_e>0$.
    We denote $c=max_{e \in E} c_e$.
    \item A sequence of requests $r_1,r_2,\ldots,$ each of which is a simple path in the graph.
    Every request $r_i$ has a cost $p_i>0$ associated with it.
\end{itemize}

A feasible solution for the problem must assure that for every
edge $e$, the number of accepted requests whose paths contain $e$
is at most its capacity $c_e$. The goal is to find a feasible
solution of minimum cost of the rejected requests. The online
algorithm is given requests one at a time, and must decide whether
to accept or reject each request. It is also allowed to preempt a
request, i.e. to reject it after already accepting it, but it
cannot accept a request after rejecting it.

Let $OPT$ be a feasible solution having minimum cost $C_{OPT}$. An
algorithm is $\beta$-competitive if the total cost of the requests
rejected by this algorithm is at most $\beta C_{OPT}$.

\textbf{Previous results for admission control.} Tight bounds were
achieved for the admission control problem, where the goal is to
maximize the number of accepted requests. Awerbuch, Azar and
Plotkin \cite{AAP93} provide an $O(\log n)$-competitive algorithm
for general graphs. For the admission control problem on a tree,
$O(\log d)$-competitive randomized algorithms appear in
\cite{ABFR94,AGLR94}, where $d$ is the diameter of the tree. Adler
and Azar presented a constant-competitive preemptive algorithm for
admission control on the line \cite{AdAz03}.

The admission control to minimize rejections problem was studied
by Blum, Kalai and Kleinberg in \cite{BlKaKl01}, where two
deterministic algorithms with competitive ratios of $O(\sqrt{m})$
and $c+1$ are given ($m$ is the number of edges in the graph and
$c$ is the maximum capacity). They raised the question of whether
an online algorithm with polylogarithmic competitive ratio can be
obtained.

We note that one can combine an algorithm for maximizing
throughput of accepted requests and an algorithm for minimizing
rejections and get one algorithm which achieves both
simultaneously with slightly degrading the competitive ratio
\cite{ABM03,BM04}.

In this paper we show that the admission control to minimize
rejections problem is a generalization of the online set cover
with repetitions problem described below:

\textbf{The online set cover with repetitions problem.} The online
set cover problem is defined as follows: Let $X$ be a ground set
of $n$ elements, and let $\SS$ be a family of subsets of $X$,
$|\SS|=m$. Each $S \in \SS$ has a non-negative cost associated
with it. An adversary gives elements to the algorithm from $X$ one
by one. Each element of $X$ can be given an arbitrary number of
times, not necessarily consecutively. An element should be covered
by a number of sets which is equal to the number of times it
arrived. We assume that the elements of $X$ and the members of
$\SS$ are known in advance to the algorithm, however, the elements
given by the adversary are not known in advance. The objective is
to minimize the cost of the sets chosen by the algorithm.

\textbf{Previous results for online set cover.} The offline
version of the set cover problem is a classic NP-hard problem that
was studied extensively, and the best approximation factor
achievable for it in polynomial time (assuming $P \neq NP$) is
$\Theta(\log n)$ \cite{C79,F98}. The basic online set cover
problem, where repetitions are not allowed, was studied in
\cite{AAABN03,FK05}. A different variant of the problem, dealing
with maximum benefit, is presented in \cite{AAFL96}. An $O(\log m
\log n)$-competitive deterministic algorithm for the online set
cover problem was given by \cite{AAABN03} where $n$ is the number
of elements and $m$ is the number of sets. A lower bound of
$\Omega(\frac{\log m \log n}{\log \log m + \log \log n})$ was also
shown for any deterministic online algorithm. A recent result of
Feige and Korman \cite{FK05} establishes a lower bound of
$\Omega(\log m \log n)$ for the competitive ratio of any
randomized {\em polynomial} time algorithm for the online set
cover problem, under the $BPP \neq NP$ assumption. They also prove
the same lower bound for any deterministic {\em polynomial} time
algorithm, under the $P \neq NP$ assumption.

\textbf{Our results.} The main result we give in this paper is an
$O(\log^2 (mc))$-competitive randomized algorithm for the
admission control to minimize rejections problem. This settles the
open question raised by Blum et al. \cite{BlKaKl01}. For the
unweighted case, where all costs are equal to 1, we slightly
improve this bound and give an $O(\log m \log c)$-competitive
randomized algorithm,

We present a simple reduction between online set cover with
repetitions and the admission control to minimize rejections
problem. This implies an $O(\log^2 (mn))$-competitive randomized
algorithm for the online set cover with repetitions problem. For
the unweighted case (all costs are equal to $1$), we get an
$O(\log m \log n)$-competitive randomized algorithm. This matches
the lower bound of $\Omega(\log m \log n)$ given by Feige and
Korman. Their results also imply a lower bound of $\Omega(\log m
\log c)$ for the competitive ratio of any randomized {\em
polynomial} time algorithm for the admission control to minimize
rejections problem (assuming $BPP \neq NP$).

The derandomization techniques used in \cite{AAABN03} for the
online set cover problem do not seem to apply here. This is why we
also consider the bicriteria version of the online set cover with
repetition problem. For a given constant $\epsilon > 0$, the
online algorithm is required to cover each element by a fraction
of $1-\epsilon$ times the number of its appearances. Specifically,
at any point of time, if an element has been requested $k$ times
so far, then the optimal solution covers it by $k$ different sets,
whereas the online algorithm covers it by $(1-\epsilon)k$
different sets. We give an $O(\log m \log n)$-competitive
deterministic bicriteria algorithm for this problem.

\textbf{Techniques.} The techniques we use follow those of
\cite{AAABN03,AAABN04} together with some new ideas. We start with
an online fractional solution which is monotone increasing during
the algorithm. Then, the fractional solution is converted into a
randomized algorithm. Interestingly, to get a deterministic
bicriteria algorithm we use a different fractional algorithm than
the one used for the randomized algorithm.


\section{Fractional algorithm for \\ admission control}\label{sec:frac}
In this section we describe a fractional algorithm for the
problem. A fractional algorithm is allowed to reject a fraction of
a request $r_i$. We use a weight $f_i$ for this fraction.
Specifically, if $0 \leq f_i < 1$, we reject with percentage of
precisely $f_i$. If $f_i \geq 1$, then the request is completely
rejected. At any stage of the fractional algorithm we will use the
following notation:
\begin{itemize}
    \item $REQ_e$ will denote the set of requests that arrived
    so far whose paths contain the edge $e$.
    \item $REQ$ will denote $\bigcup_{e \in E} REQ_e$.
    \item $ALIVE_e$ will denote the requests from $REQ_e$
    that have not been fully rejected (requests $r_i$ for which $f_i < 1$).
    \item $n_e$ will denote the excess of edge $e$ caused by the
    requests in $ALIVE_e$.
    $$n_e = |ALIVE_e| - c_e$$
\end{itemize}
The requirement from a fractional algorithm is that for every edge
$e$,
$$ \sum_{i \in ALIVE_e} f_i \geq n_e $$
The cost associated with a fractional algorithm is defined to be
$\sum_{i \in REQ} \min \{f_i,1\} p_i$.

We will now describe an $O(\log (mc))$-competitive algorithm for
the problem, even versus a fractional optimum. The cost of the
optimal fractional solution, $C_{OPT}$ is denoted by $\alpha$.

We may assume, by doubling, that the value of $\alpha$ is known up
to a factor of $2$. To determine the initial value of $\alpha$ we
look for the first time in which we must reject a request from an
edge $e$. We can start guessing $\alpha = min_{i \in REQ_e} p_i$,
and then run the algorithm with this bound on the optimal
solution. If it turns out that the value of the optimal solution
is larger than our current guess for it, (that is, the cost
exceeds $\Theta(\alpha \log (mc))$), then we "forget" about all
the request fractions rejected so far, update the value of
$\alpha$ by doubling it, and continue. We note that the cost of
the request fractions that we have "forgotten" about can increase
the cost of our solution by at most a factor of $2$, since the
value of $\alpha$ was doubled in each step.

We thus assume that $\alpha$ is known. Denote by $R_{big}$ the
requests with cost exceeding $2 \alpha$. The optimal fractional
solution can reject a total fraction of at most $1/2$ out of the
requests of $R_{big}$. Hence, when an edge is requested more than
its capacity, the fractional optimum must reject a total fraction
of at least $1/2$ out of the requests not in $R_{big}$ whose paths
contain the edge. By doubling the fraction of rejection for all
the requests not in $R_{big}$ (keeping fractions to be at most
$1$) and completely accepting all the requests in $R_{big}$, we
get a feasible fractional solution whose cost is at most twice the
optimum. Hence, the online algorithm can always completely accept
requests of cost exceeding $2 \alpha$ (and adjust the edge
capacities $c_e$ accordingly).

Denote by $R_{small}$ the requests with cost at most
$\alpha/(mc)$. We claim that we can completely reject all the
requests from $R_{small}$. For each edge $e$, the optimal solution
can accept a total fraction of at most $c$ out of the requests
whose paths contain the edge $e$, and therefore it can accept a
total fraction of at most $mc$ requests. The fractions of requests
accepted out of $R_{small}$ have total cost at most $mc \cdot
\alpha/(mc) = \alpha$. It follows that the optimal solution pays
at least $cost(R_{small})-\alpha$ for the fractions of requests
out of $R_{small}$ that it rejected. Therefore, the online
algorithm can reject all the requests in $R_{small}$ and pay
$cost(R_{small})$. If $cost(R_{small}) < 2\alpha$, then this adds
only $O(\alpha)$ to the cost of the online algorithm. If
$cost(R_{small}) \geq 2\alpha$, then $cost(R_{small}) \leq
2(cost(R_{small})-\alpha)$, so the online algorithm is
$2$-competitive with respect to the requests in $R_{small}$.

By the above arguments, all the requests of cost smaller than
$\alpha/(mc)$ or greater than $2 \alpha$ are rejected immediately
or accepted permanently (edge capacities are decreased in this
case), respectively. An algorithm needs to handle only requests
of cost between $\alpha/(mc)$ and $2 \alpha$. We normalize the
costs so that the minimum cost is $1$ and the maximum cost is $g
\leq 2mc$, and fix $\alpha$ appropriately.

The algorithm maintains a weight $f_i$ for each request $r_i$. The
weights can only increase during the run of the algorithm.
Initially $f_i=0$ for all the requests. Assume now that the
algorithm receives a request $r_i$ for a path of cost $p_i$. For
each edge $e$, we update $REQ_e$, $ALIVE_e$ and $n_e$ according to
the definitions given above. The following is performed for all
the edges $e$ of the path of $r_i$, in an arbitrary order.

\begin{enumerate}
    \item If $\sum_{i \in ALIVE_e} f_i \geq n_e$, then do
    nothing.
    \item Else, while $\sum_{i \in ALIVE_e} f_i < n_e$, perform a {\em weight augmentation}:
    \begin{enumerate}
        \item For each $i \in ALIVE_e$, if $f_i=0$, then set
        $f_i=1/(gc)$.\label{alg1}
        \item For each $i \in ALIVE_e$, $f_i \gets
        f_i(1+\frac{1}{n_e p_i})$.\label{alg2}
        \item Update $ALIVE_e$ and $n_e$.
    \end{enumerate}
\end{enumerate}

Note that the fractional algorithm starts with all weights equal
to zero. This is necessary, since the online algorithm must reject
$0$ requests in case the optimal solution rejects $0$ requests.
Hence, the algorithm is competitive for $\alpha = 0$, and from now
on we assume without loss of generality that $\alpha > 0$. In the
following we analyze the performance of the algorithm.

\begin{lemma}\label{frac_steps}
The total number of weight augmentations performed during the
algorithm is at most $O(\alpha \log (gc))$.
\end{lemma}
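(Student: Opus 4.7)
The plan is a standard potential-function argument against an optimal fractional solution. Let $f^*_i \in [0,1]$ be the reject-fraction assigned to request $r_i$ by an optimal fractional OPT, so $\sum_i f^*_i p_i = \alpha$, and the feasibility of OPT at every edge $e$ gives, even when restricted to requests already seen,
$$
\sum_{i \in REQ_e} f^*_i \;\geq\; |REQ_e| - c_e,
$$
since otherwise the accepted fraction on $e$ from the seen requests alone already exceeds $c_e$, contradicting OPT's feasibility over the whole sequence. Define the potential
$$
\Phi \;=\; \sum_{i \,:\, f_i > 0} f^*_i \, p_i \, \ln(gc \cdot f_i),
$$
which is $0$ initially. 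A request is removed from $ALIVE_e$ as soon as $f_i \geq 1$, so the algorithm never multiplies an $f_i$ that is already $\geq 1$; since the multiplier $1 + 1/(n_e p_i) \leq 2$ (using $n_e \geq 1$ during an augmentation and $p_i \geq 1$ after normalization), we have $f_i \leq 2$ throughout. This gives $\Phi \leq \sum_i f^*_i p_i \ln(2gc) = O(\alpha \log(gc))$ at all times.

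The next step is to show that every weight augmentation on an edge $e$ raises $\Phi$ by at least $1/2$. Subtracting from the displayed inequality the at-most-$1$ contribution of each $i \in REQ_e \setminus ALIVE_e$ (on which $f^*_i \leq 1$) yields
$$
\sum_{i \in ALIVE_e} f^*_i \;\geq\; n_e.
$$
For a request $i \in ALIVE_e$ with $f_i > 0$, step 2(b) multiplies $f_i$ by $1 + 1/(n_e p_i)$, increasing $\ln(gc f_i)$ by $\ln(1 + 1/(n_e p_i))$. For a request with $f_i = 0$, the joint effect of steps 2(a) and 2(b) sets $f_i = (1/(gc))(1 + 1/(n_e p_i))$, so its contribution to $\Phi$ jumps from $0$ to $f^*_i p_i \ln(1 + 1/(n_e p_i))$, exactly the same increment. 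Because $n_e p_i \geq 1$, the inequality $\ln(1+x) \geq x/2$ for $x \in (0,1]$ gives
$$
\Delta \Phi \;\geq\; \sum_{i \in ALIVE_e} f^*_i p_i \cdot \frac{1}{2 n_e p_i} \;=\; \frac{1}{2 n_e} \sum_{i \in ALIVE_e} f^*_i \;\geq\; \frac{1}{2},
$$
where the cost parameter $p_i$ cancels with the $p_i$ in the multiplier, which is precisely why the weighted case costs no more than the unweighted one in this bound.

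Combining the two estimates, the total number of weight augmentations is at most $2\Phi_{\max} = O(\alpha \log(gc))$. The main points that require care are (i) the uniform treatment of steps 2(a) and 2(b) as a single atomic update, so that the initial boost contributes the same $\ln(1 + 1/(n_e p_i))$ increment as an ordinary multiplicative step, and (ii) the reduction of OPT's edge constraint to the currently seen requests, which uses only $1 - f^*_i \geq 0$ for the unseen ones. The rest is a routine plug-in of the update rule.
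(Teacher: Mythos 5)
Your proof is correct and is essentially the paper's argument in additive form: your potential is the logarithm of the paper's $\prod_{i} \max\{f_i,1/(gc)\}^{f^*_i p_i}$, your lower bound of $1/2$ per augmentation via $\ln(1+x)\geq x/2$ plays the role of the paper's "multiplied by at least $2$" step via $(1+z/x)^x \geq 1+z$, and the upper bound $f_i \leq 2$ is used identically. The only difference is that you spell out the derivation of $\sum_{i \in ALIVE_e} f^*_i \geq n_e$ from OPT's edge feasibility, which the paper simply asserts.
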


\begin{proof}
Consider the following potential function:
$$ \Phi = \prod_{i \in REQ} {max\{f_i,1/(gc)\}}^{f^*_i p_i}$$
where $f^*_i$ is the weight of the request $r_i$ in the optimal
fractional solution. We now show three properties of $\Phi$:
\begin{itemize}
    \item The initial value of the potential function is:
    $(gc)^{-\alpha}$.
    \item The potential function never exceeds $2^\alpha$.
    \item In each weight augmentation step, the potential function
    is multiplied by at least $2$.
\end{itemize}
The first two properties follow directly from the initial value
and from the fact that no request gets a weight of more than
$1+1/p_i \leq 2$. Consider an iteration in which the adversary
gives a request $r_i$ with cost $p_i$. Now suppose that a weight
augmentation is performed for an edge $e$. We must have $\sum_{i
\in ALIVE_e} f^*_i \geq n_e$ since the optimal solution must
satisfy the capacity constraint. Thus, the potential function is
multiplied by at least:
$$
\prod_{i \in ALIVE_e} \left(1+\frac{1}{n_e p_i}\right)^{f^*_i p_i}
\geq \prod_{i \in ALIVE_e} \left(1 + \frac{1}{n_e}\right)^{f^*_i}
\geq 2
$$
The first inequality follows since for all $x \geq 1$ and $z \geq
0$, $(1+z/x)^x \geq 1+z$ and the last inequality follows since
$\sum_{i \in ALIVE_e} f^*_i \geq n_e$. It follows that the total
number of weight augmentation steps is at most:
$$
\log_2 (2gc)^\alpha = O(\alpha \log gc )
$$
\end{proof}

\begin{theorem}\label{frac_comp}
For the weighted case, the fractional algorithm is $O(\log
(mc))$-competitive. In case all the costs are equal to $1$, the
algorithm is $O(\log c)$-competitive.
\end{theorem}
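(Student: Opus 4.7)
The plan is to combine Lemma~\ref{frac_steps}, which bounds the number of weight augmentations by $O(\alpha \log (gc))$, with an $O(1)$ bound on the fractional cost added per augmentation. Multiplying yields $O(\alpha \log (gc))$ total, and since the normalization ensures $g\le 2mc$, this is $O(\alpha \log (mc))$ in the weighted case; in the unweighted case $g=1$ immediately gives $O(\alpha \log c)$.

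To bound the per-augmentation cost, I would handle steps~(\ref{alg1}) and~(\ref{alg2}) separately. For step~(\ref{alg2}), the while-loop condition ensures $\sum_{i\in ALIVE_e} f_i < n_e$ at the start of the iteration, so the added cost is
$$
\sum_{i\in ALIVE_e} p_i \cdot \frac{f_i}{n_e p_i} \;=\; \frac{1}{n_e}\sum_{i\in ALIVE_e} f_i \;<\; 1,
$$
up to a negligible correction from any $f_j$ just set to $1/(gc)$ in step~(\ref{alg1}) of the same iteration (each such $j$ contributes only $1/(gcn_e)$ to this sum). For step~(\ref{alg1}), each initialization costs $p_i/(gc)\le 1/c$ (using $p_i\le g$) and each request is initialized at most once over the entire run, so the total step~(\ref{alg1}) contribution is at most $|REQ|/c$. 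I would then invoke the bound $|REQ|\le mc+\alpha$, which follows from fractional feasibility of OPT: OPT's accepted mass sums to at most $mc$ across all capacities, while the rejected mass is at most $\alpha$ in the unit-cost-normalized regime. This renders the aggregate step~(\ref{alg1}) cost comfortably dominated by the $O(\alpha \log(mc))$ target.

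Finally I would add back the $O(\alpha)$ already absorbed during the preprocessing---rejecting $R_{small}$, accepting $R_{big}$, and the doubling guess for $\alpha$---to conclude the stated competitive ratios for both the weighted and unweighted cases.

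The main obstacle I anticipate is the step~(\ref{alg1}) accounting: a single augmentation can initialize up to $|ALIVE_e|=c_e+n_e$ requests and therefore contribute $|ALIVE_e|/c$ to the cost, which need not be $O(1)$ when $n_e$ is large, so it cannot be charged locally. The remedy is a global amortization exploiting the once-per-request nature of initialization together with the LP-feasibility bound on $|REQ|$; the remainder of the argument follows routinely from Lemma~\ref{frac_steps}.
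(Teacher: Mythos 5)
There is a genuine gap in your accounting for step~\ref{alg1}. Your global amortization bounds the total initialization cost by $|REQ|/c \leq m + \alpha/c$, and you then assert this is ``comfortably dominated'' by the $O(\alpha \log (mc))$ target. It is not: the additive term $m$ is independent of $\alpha$, so whenever $\alpha = O(1)$ (say, a single unit-cost request must be rejected) and $m$ is large, $m + \alpha/c$ exceeds $\alpha \log (mc)$ by an unbounded factor. No tightening of the estimate $|REQ| \leq mc + \alpha$ can rescue this, since instances with $\Theta(mc)$ requests and $\alpha = 1$ genuinely exist; the amortization must be against the number of augmentations, not against $|REQ|$.

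Moreover, the obstacle you cite as the reason for abandoning a local charge is not actually there. You claim a single augmentation can initialize up to $|ALIVE_e| = c_e + n_e$ requests; in fact it can initialize at most $c_e + 1 \leq c+1$, and this is precisely the paper's key observation. The reason is the invariant $\sum_{i \in ALIVE_e} f_i \geq n_e = |ALIVE_e| - c_e$, which holds for every edge at the end of processing each request: writing the slack as $\sum_{i \in ALIVE_e} (f_i - 1) + c_e$, weight increases (and the removals from $ALIVE_e$ they trigger, since a removed term $f_i - 1$ was negative) only increase it, and only the arrival of a new request on $e$ decreases it, by exactly $1$. Since every $i \in ALIVE_e$ has $f_i < 1$, if $z$ of them had $f_i = 0$ just before the current request arrived, then $|ALIVE_e| - z > \sum_{i \in ALIVE_e} f_i \geq |ALIVE_e| - c_e$, so $z < c_e$; adding the current request gives at most $c+1$ initializations per augmentation, hence a step~\ref{alg1} cost of at most $(c+1) \cdot g/(gc) = 1 + 1/c$ per augmentation. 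With this, both steps cost $O(1)$ per augmentation and Lemma~\ref{frac_steps} finishes the proof, which is exactly the paper's route. Your treatment of step~\ref{alg2}, of the preprocessing, and of the reduction from $g$ to $2mc$ (and to $1$ in the unweighted case) is fine and matches the paper.
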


\begin{proof}
The cost associated with the online algorithm is $\sum_{i \in REQ}
\min \{f_i,1\} p_i$, which we will denote by $C_{ON}$. Consider a
weight augmentation step performed for an edge $e$. In step
\ref{alg1} of the algorithm, the weights of at most $c+1$ requests
change from $0$ to $1/(gc)$. This is because before the current
request arrived, there could have been at most $c$ requests
containing the edge $e$ and having $f_i=0$ (the maximum capacity
is $c$). Since the maximum cost is $g$, the total increase of
$C_{ON}$ in step \ref{alg1} of the algorithm is at most
$(c+1)\frac{1}{gc}g = 1+1/c$. If follows that in step \ref{alg1},
the quantity $\sum_{i \in ALIVE_e} f_i$ can increase by at most
$1+1/c$. A weight augmentation is performed as long as $\sum_{i
\in ALIVE_e} f_i < n_e$. Before step \ref{alg2} we have that
$\sum_{i \in ALIVE_e} f_i < n_e + 1+1/c$. Thus, the total increase
of $C_{ON}$ in step \ref{alg2} of the algorithm does not exceed
$$
\sum_{i \in ALIVE_e} f_i p_i \frac{1}{n_e p_i} = \sum_{i \in
ALIVE_e} \frac{f_i}{n_e} < 2+1/c
$$
It follows that the total increase of $C_{ON}$ in a weight
augmentation step is at most $3+2/c$. Using lemma \ref{frac_steps}
which bounds the number of augmentation steps, we conclude that
the algorithm is $O(\log (gc))$-competitive.

For the weighted case, we saw that the input can be transformed so
that $g \leq 2mc$, which implies that the algorithm is $O(\log
(mc))$-competitive. In case all the costs are equal to $1$, $g$ is
also equal to $1$ and the algorithm is $O(\log c)$-competitive.

\end{proof}

\section{Randomized algorithm for \\ admission control}\label{sec:rand}

We describe in this section an $O(\log^2 (mc))$-competitive
randomized algorithm for the weighted case and a slightly better
$O(\log m \log c)$-competitive randomized algorithm for the
unweighted case.

The algorithm maintains a weight $f_i$ for each request $r_i$,
exactly like the fractional algorithm. Assume now that the
algorithm receives a request $r_i$ with cost $p_i$. The following
is performed in this case.
\begin{enumerate}
    \item Perform all the weight augmentations according to the fractional algorithm.
    \item Reject all requests whose weight is at least $\frac{1}{12 \log (mc)}$.\label{rand2}
    \item For every request $r$, if its weight $f$ increased by $\delta$, then reject the request $r$
    with probability $12\delta \log (mc)$.\label{rand1}
    \item If the current request $r_i$ cannot be accepted (some edge would be over capacity), then reject the
    request. Else, accept the request $r_i$.\label{rand3}
\end{enumerate}

We can assume that $|REQ_e|$, the total number of requests whose
paths contain a specific edge $e$, is less than $4mc^2$. To see
this, note that the fractional algorithm normalizes the costs so
that the minimum cost is $1$ and the maximum cost is at most
$2mc$. If $|REQ_e| \geq 4mc^2$, then since the optimal solution
can accept at most $c$ requests from $REQ_e$, it must pay a cost
of at least $t-2mc^2$ for requests rejected out of $REQ_e$, where
$t$ is the total cost of these requests. The online algorithm can
reject all the requests in $REQ_e$, pay $t$ and it will still be
$2$-competitive with respect to the requests in $REQ_e$, since $t
\geq 4mc^2$.

\begin{theorem}\label{rand_comp}
For the weighted case, the randomized algorithm is $O(\log^2
(mc))$-competitive.
\end{theorem}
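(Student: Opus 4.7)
The plan is to decompose the expected cost of the randomized algorithm into contributions from its three rejection mechanisms---the deterministic rounding (step~\ref{rand2}), the random rounding (step~\ref{rand1}), and the capacity check (step~\ref{rand3})---and bound each by $O(\log(mc))\cdot C_{ON}$, where $C_{ON}$ denotes the cost of the fractional solution of Theorem~\ref{frac_comp}. A request killed in step~\ref{rand2} has weight at least $1/(12\log(mc))$, so its cost satisfies $p_i \leq 12\log(mc)\cdot f_i p_i$, giving a total step-\ref{rand2} cost of at most $12\log(mc)\cdot C_{ON}$. For step~\ref{rand1}, linearity of expectation over each weight increment $\delta$ gives expected cost $12\log(mc)\cdot C_{ON}$. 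Combined with Theorem~\ref{frac_comp}, these two pieces account for $O(\log^2(mc))\cdot C_{OPT}$.

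The main difficulty is bounding the expected cost of capacity rejections in step~\ref{rand3}. The plan is to show via concentration that, with high probability, the randomized rounding already respects every edge's capacity, so step~\ref{rand3} is rarely invoked. Fix an edge $e$, let $Z_e$ be the number of requests on $e$ that survive steps~\ref{rand2} and~\ref{rand1}, and partition $ALIVE_e$ into $A_e = \{i : f_i < 1/(12\log(mc))\}$ and $B_e = ALIVE_e \setminus A_e$. Every $i \in B_e$ is killed in step~\ref{rand2}, so $Z_e \leq |A_e|$. If $|A_e|\leq c_e$ we are done trivially, so set $n'_e = |A_e|-c_e \geq 1$. The fractional constraint $\sum_{i\in ALIVE_e} f_i \geq n_e$, together with $\sum_{i\in B_e} f_i \leq |B_e|$ and the identity $|A_e|+|B_e| = c_e+n_e$, yields the key bound $S := \sum_{i\in A_e} f_i \geq n'_e$. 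Consequently the expected number of step-\ref{rand1} rejections from $A_e$ is $\Omega(\log(mc)\cdot S) \geq \Omega(\log(mc)\cdot n'_e)$, exceeding the surplus $n'_e$ by a $\log(mc)$ factor, while the variance of this rejection count is of the same order. A multiplicative Chernoff lower-tail bound then yields $P[Z_e > c_e] = O((mc)^{-3})$.

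A union bound over the $m \leq mc$ edges gives $P[\exists e : Z_e > c_e] = O((mc)^{-2})$. Outside this bad event step~\ref{rand3} rejects nothing; on the bad event the cost of step-\ref{rand3} rejections is bounded crudely by the total cost of all arriving requests, which is at most $|REQ|\cdot 2\alpha \leq 8m^2c^2\alpha$ thanks to the normalization $p_i \leq 2\alpha$ and the preprocessing reduction $|REQ_e|<4mc^2$ stated just before the theorem. Hence the expected step-\ref{rand3} cost is $O(\alpha) = O(C_{OPT})$, and summing the three contributions yields the claimed $O(\log^2(mc))$-competitive ratio. The main obstacle lies in the Chernoff computation: one must use $n'_e = n_e - |B_e|$ to convert the fractional guarantee $\sum_{i\in ALIVE_e} f_i \geq n_e$ into the tighter per-set bound $S \geq n'_e$, and then carefully exploit $S \geq n'_e \geq 1$ to make the expected rejections in $A_e$ dominate their target by the $\log(mc)$-factor needed for a $(mc)^{-3}$ tail---which is exactly what the constant $12$ in the algorithm is calibrated to deliver.
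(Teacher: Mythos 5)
Your decomposition into the three rejection sources, the bound of $12\log(mc)\cdot C_{frac}$ for each of steps \ref{rand2} and \ref{rand1}, and the per-edge Chernoff argument (including the $A_e/B_e$ split, which is actually more careful than the paper's shortcut of simply assuming $f_i < \frac{1}{12\log(mc)}$ for all of $ALIVE_e$) all match the paper's proof. The gap is in your final accounting for step \ref{rand3}. You define a single bad event per edge and union-bound over the $m$ edges only, but the capacity check is performed online: when a request $r$ arrives, edge $e$ forces its rejection if \emph{at that moment} fewer than the current $n_e$ requests on $e$ have already been rejected, and rejections that occur later are of no help. So the event you must control is indexed by (edge, arrival time) pairs, of which there are up to $m\cdot 4mc^2$, not $m$. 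The final-time event $Z_e\le c_e$ neither implies nor is implied by "step \ref{rand3} never fires on account of $e$." With the corrected union bound the failure probability is only $O(1/(mc))$, and multiplying that by your crude worst-case charge of $8m^2c^2\alpha$ gives $O(mc\,\alpha)$ --- far from the $O(\alpha)$ you need.

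The repair is exactly what the paper does: abandon the global bad event and instead charge each failure only for the single request it causes to be rejected. For each request $r$ and each edge $e$ on its path, the probability that $e$ causes $r$ to be rejected in step \ref{rand3} is at most the probability that the concentration bound fails for $e$ at the moment $r$ arrives, which is at most $3/(m^3c^3)$, and the cost at stake is just $p_r\le 2mc$, not the total cost of the instance. Summing over the at most $4mc^2$ requests per edge and over the $m$ edges gives an expected step-\ref{rand3} cost of $(4mc^2)(2mc)\cdot 3/(m^3c^3)\cdot m=24=O(\alpha)$. So your overall strategy is sound and essentially the paper's, but the "bad event times total cost" step as written does not close, and needs to be replaced by this per-request linearity-of-expectation argument (or, alternatively, the constant $12$ would have to be inflated to push the per-event failure probability down to $O((mc)^{-5})$ so that your crude charge survives the temporal union bound).
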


\begin{proof}
Denote by $C_{frac}$ the cost of the fractional algorithm. The
expected cost of requests rejected in step \ref{rand1} of the
algorithm is at most $12 C_{frac} \log (mc)$. The cost of requests
rejected in step \ref{rand2} has the same upper bound.

We now calculate the probability for a request $r$ to be rejected
in step \ref{rand3}. This can happen only if the path of request
$r$ contains an edge $e$ for which $\sum_{i \in ALIVE_e} f_i \geq
n_e$ but the randomized algorithm rejected less than $n_e$
requests whose paths contain the edge $e$. All the requests with
weight at least $\frac{1}{12 \log (mc)}$ are rejected for sure,
so we can assume that $f_i < \frac{1}{12 \log (mc)}$ for all $i
\in ALIVE_e$.

Suppose that $i \in ALIVE_e$ and that during all runs of step
\ref{rand1} of the algorithm the request $r_i$ has been rejected
with probabilities $q_1,\ldots,q_n$, where $\sum_{k=1}^n q_k = 12
f_i \log (mc)$. The probability that $r_i$ will be rejected is at
least
$$
1-\prod_{k=1}^n (1-q_k) \geq 1-e^{-\sum_{i=k}^n q_k} = 1-e^{-12
f_i \log mc} \geq 6 f_i \log mc
$$
The last inequality follows since for all $0 \leq x \leq 1$,
$1-e^{-x} \geq x/2$.

The number of requests in $ALIVE_e$ which were rejected by the
algorithm is a random variable whose value is the sum of mutually
independent $\{0,1\}$-valued random variables and its expectation
is at least $\mu = 6 n_e \log mc$. By Chernoff bound (c.f., e.g.,
\cite{AS00}), the probability for this random variable to get a
value less than $(1 - a) \mu$ is at most $e^{-a^2 \mu/2}$ for
every $a > 0$. Therefore, the probability to be less than $n_e$ is
at most
$$
e^{-(1-\frac{1}{6 \log mc})^2(6 n_e \log mc)/2} \leq \frac{3}{m^3
c^3}
$$
The request costs were normalized, so that the maximum cost is at
most $2mc$. Each edge is contained in the paths of at most $4mc^2$
requests. Therefore, the expected cost of requests which are
rejected in step \ref{rand3} because of this edge is at most
$(4mc^2) (2mc) 3/(m^3c^3) = 24/m$. Thus, the total expected cost
of requests rejected in step \ref{rand3} is $24$. The result now
follows from Theorem \ref{frac_comp}.

\end{proof}

For the unweighted case we slightly change the algorithm as
follows. In step \ref{rand1} of the algorithm we reject a request
with probability $4\delta \log m$, and in step \ref{rand2} we
reject all the requests whose weight is at least $1 / (4 \log m)$.

\begin{theorem}\label{unweighted_rand_comp}
For the unweighted case, the randomized algorithm is $O(\log m
\log c)$-competitive.
\end{theorem}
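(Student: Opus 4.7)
The plan is to adapt the proof of Theorem~\ref{rand_comp} to the unweighted case with the new constants $4\log m$ (in step \ref{rand1}) and $1/(4\log m)$ (in step \ref{rand2}). Since all costs equal $1$ we have $g = 1$, so Theorem~\ref{frac_comp} gives $C_{frac} = O(\log c) \cdot C_{OPT}$. It therefore suffices to show that the randomized rounding costs an additional factor of $O(\log m)$ above $C_{frac}$, plus an additive constant.

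Bounding steps \ref{rand1} and \ref{rand2} is essentially as in the weighted case. A request $r_i$ is rejected in step \ref{rand1} with cumulative probability at most $4 f_i \log m$ summed over all weight augmentations, and a request rejected in step \ref{rand2} must have $f_i \geq 1/(4\log m)$. Both contributions are therefore bounded by $O(\log m) \cdot \sum_i f_i p_i = O(\log m) \cdot C_{frac} = O(\log m \log c) \cdot C_{OPT}$.

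The main difficulty is step \ref{rand3}. Following the weighted-case analysis, each $i \in ALIVE_e$ with $f_i < 1/(4\log m)$ is rejected in step \ref{rand1} with probability at least $1 - e^{-4 f_i \log m} \geq 2 f_i \log m$, so the expected number of rejections from $ALIVE_e$ is at least $2 n_e \log m$ and Chernoff gives a failure probability of at most $m^{-\Omega(n_e)}$ at each arrival. In the weighted case this probability was multiplied by $|REQ_e|$ and the maximum request cost to give $O(1/m)$ per edge; here, since the maximum request cost is only $1$, that direct bound is too weak. I would instead sum the Chernoff failure probabilities across the arrivals through $e$: because $n_e$ increases by $1$ with each new arrival past capacity, the sum forms a geometric series $\sum_{k \geq 1} m^{-\Omega(k)} = O(1/m)$ per edge and hence $O(1)$ over all $m$ edges. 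The main technical obstacle is verifying that the Chernoff exponent is strictly greater than one so that this geometric sum converges fast enough, which may require taking the constant in step \ref{rand1} slightly larger than the nominal $4$. Combining everything gives total expected cost $O(\log m \log c) \cdot C_{OPT} + O(1) = O(\log m \log c) \cdot C_{OPT}$.
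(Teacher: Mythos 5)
Your treatment of steps \ref{rand1} and \ref{rand2} is fine and matches the paper, but the argument for step \ref{rand3} has a genuine gap. The Chernoff exponent at each over-capacity arrival is $\Omega(n_e \log m)$ where $n_e = |ALIVE_e| - c_e$, and $ALIVE_e$ contains only the requests that have not yet been fully rejected fractionally. Requests leave $ALIVE_e$ as their weights reach $1$, so $n_e$ does \emph{not} increase by $1$ with each new arrival past capacity; it can sit at $1$ for every one of the $|REQ_e| - c_e$ over-capacity arrivals (e.g., a capacity-$1$ edge where each new arrival pushes the previous alive request to weight $\geq 1$). Your geometric series $\sum_k m^{-\Omega(k)}$ therefore degenerates to $(|REQ_e|-c_e)\cdot m^{-\Omega(1)}$ per edge, and since $|REQ_e|$ is only bounded by roughly $4mc^2$, summing over edges does not give $O(1)$ no matter how you tune the constant in step \ref{rand1}. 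The quantity that does increase by one per over-capacity arrival is $|REQ_e| - c_e$, which you have conflated with $n_e$.

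The paper closes this differently: it accepts the per-arrival, per-edge failure probability of $3/m$, bounds the number of arrivals at which edge $e$ can cause a step-\ref{rand3} rejection by $|REQ_e| - c_e \leq Q$ where $Q = \max_{e}(|REQ_e| - c_e)$, and sums over the $m$ edges to get an expected step-\ref{rand3} cost of at most $Q(3/m)m = 3Q$. Since the optimum must reject at least $Q$ requests, this term is $O(C_{OPT})$ rather than $O(1)$, which is all that is needed. The missing idea in your proposal is precisely this comparison to $Q$; once you replace the geometric-series claim with it, the rest of your argument goes through.
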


\begin{proof}
Following the proof of Theorem \ref{rand_comp}, we get that the
probability for an edge to cause a specific request to be rejected
in step \ref{rand3} of the randomized algorithm is at most
$$
e^{-(1-\frac{1}{2 \log m})^2(2 n_e \log m)/2} \leq \frac{3}{m}
$$

Denote by $Q$ the quantity $max_{e \in E} (|REQ_e| - c_e)$. Hence,
$Q$ is the maximum excess capacity in the network. The total
expected cost of requests rejected in step \ref{rand3} is at most
$Q(3/m)m = 3Q$. It is obvious that the optimal solution must
reject at least $Q$ requests. The result now follows from Theorem
\ref{frac_comp}.

\end{proof}

\section{Reducing online set cover to admission control}\label{sec:reduction}
We now describe the reduction between online set cover and
admission control. Suppose we are given the following input to the
online set cover with repetitions problem: $X$ is a ground set of
$n$ elements and $\SS$ is a family of $m$ subsets of $X$, with a
positive cost $c_S$ associated with each $S \in \SS$. The instance
of the admission control to minimize rejections problem is
constructed as follows: The graph $G=(V,E)$ has an edge $e_j$ for
each element $j \in X$. The capacity of the edge $e_j$ is defined
to be the number of sets that contain the element $j$. The maximum
capacity is therefore at most $m$.

The requests are given to the admission control algorithm in two
phases. In the first phase, before any element is given to the
online set cover algorithm, we generate $m$ requests to the
admission control online algorithm. For every $S \in \SS$, the
request consists of all the edges $e_j$ such that $j \in S$. The
online algorithm can accept all the requests and this will cause
the edges to reach their full capacity.

In the second phase, each time the adversary gives an element $j$
to the online set cover algorithm, we generate a request which
consists of the one edge $e_j$ and give it to the admission
control algorithm. In case the request caused the edge $e_j$ to be
over capacity, the algorithm will have to reject one request in
order to keep the capacity constraint.

In case there is a feasible cover for the input given to the
online set cover problem, there is no reason for the admission
control algorithm to reject requests that were given in the second
phase. This is because requests in the second phase consist of
only one edge. Thus, we can assume that the admission control
algorithm rejects only requests given in the first phase, which
correspond to subsets of $X$.

It is easy to see that the requests rejected by the admission
control algorithm correspond to a legal set cover. We reduced an
online set cover problem with $n$ elements and $m$ sets to an
admission control problem with $n$ edges and maximum capacity at
most $m$. The fact that the requests we generated are not simple
paths in the graph can be easily fixed by adding extra edges.

\section{Deterministic bicriteria algorithm for online set cover}\label{sec:det}

In this section we describe, given any constant $\epsilon > 0$, an
$O(\log m \log n)$-competitive deterministic bicriteria algorithm
that covers each element by at least $(1-\epsilon)k$ sets, where
$k$ is the number of times the element has been requested, whereas
the optimum covers it $k$ times. We assume for simplicity that all
the sets have cost equal to 1. The result can be easily
generalized for the weighted case using techniques from
\cite{AAABN03}.

The algorithm maintains a weight $w_S > 0$ for each $S \in \SS$.
Initially $w_S = 1/(2m)$ for each $S \in \SS$. The weight of each
element $j \in X$ is defined as $w_j = \sum_{S \in \SS_j} w_S$,
where $\SS_j$ denotes the collection of sets containing element
$j$. Initially, the algorithm starts with the empty cover $\CC =
\emptyset$. For each $j \in X$, we define $cover_j = |\SS_j
\bigcap \CC|$, which is the number of times element $j$ is covered
so far. The following potential function is used throughout the
algorithm:

$$ \Phi = \sum_{j \in X} n^{2(w_j - cover_j)}$$

We give a high level description of a single iteration of the
algorithm in which the adversary gives an element $j$ and the
algorithm chooses sets that cover it. We denote by $k$ the number
of times that the element $j$ has been requested so far.

\begin{enumerate}
    \item If $cover_j \geq (1-\epsilon)k$, then do nothing.
    \item Else, while $cover_j < (1-\epsilon)k$, perform a {\em weight augmentation}:
    \begin{enumerate}
        \item For each $S \in \SS_j - \CC$, $w_S \gets w_S(1+\frac{1}{2k})$.\label{detalg1}
        \item Add to $\CC$ all the subsets for which $w_S \geq 1$.\label{detalg2}
        \item Choose from $\SS_j$ at most $2 \log n$ sets to $\CC$\label{detalg3}
        so that the value of the potential function $\Phi$ does
        not exceed its value before the weight augmentation.
    \end{enumerate}
\end{enumerate}

In the following we analyze the performance of the algorithm and
explain which sets to add to the cover $\CC$ in step \ref{detalg3}
of the algorithm. The cost of the optimal solution $\CC_{OPT}$ is
denoted by $\alpha$.

\begin{lemma}\label{det_steps}
The total number of weight augmentations performed during the
algorithm is at most $O(\alpha \log m)$.
\end{lemma}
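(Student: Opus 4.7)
The plan is to mimic the potential-function template from Lemma~\ref{frac_steps}, but adapted to the set-cover weights $w_S$. I would introduce the potential
$$\Psi = \prod_{S \in \CC_{OPT}} w_S$$
and establish three facts: (i) initially $\Psi_0 = (2m)^{-\alpha}$, since every $w_S = 1/(2m)$; (ii) $w_S \leq 3/2$ throughout the algorithm, hence $\Psi \leq (3/2)^\alpha$ at every point; and (iii) each weight augmentation multiplies $\Psi$ by a factor bounded below by a constant strictly greater than $1$ (depending only on $\epsilon$).

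Fact (ii) is easy: once $w_S$ reaches $1$ the set $S$ enters $\CC$ in step~\ref{detalg2} and is thereafter excluded from the update in step~\ref{detalg1}; since any single update scales $w_S$ by at most $1+1/(2k) \leq 3/2$, the bound $w_S \leq 3/2$ persists forever. Observe also that steps \ref{detalg2} and \ref{detalg3} do not touch any $w_S$ (they only enlarge $\CC$), so $\Psi$ can change only during step~\ref{detalg1}.

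The heart of the argument is (iii). When a weight augmentation fires for an element $j$ that has been requested $k$ times so far, the trigger is $cover_j = |\SS_j \cap \CC| < (1-\epsilon)k$, while feasibility of the optimum forces $|\SS_j \cap \CC_{OPT}| \geq k$. Hence at least $\epsilon k$ members of $\SS_j \cap \CC_{OPT}$ lie outside $\CC$, and each of these has its weight multiplied by $(1+1/(2k))$ in step~\ref{detalg1}. Thus $\Psi$ grows by a factor of at least
$$(1+1/(2k))^{\epsilon k} \;=\; \bigl((1+1/(2k))^{2k}\bigr)^{\epsilon/2} \;\geq\; (9/4)^{\epsilon/2},$$
using that $x \mapsto (1+1/x)^x$ is increasing and equals $9/4$ at $x=2$.

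Combining the three facts, the total number of weight augmentations is at most
$$\log_{(9/4)^{\epsilon/2}}\bigl((3m)^{\alpha}\bigr) \;=\; O(\alpha \log m / \epsilon) \;=\; O(\alpha \log m)$$
for any fixed $\epsilon>0$. The only real subtlety is choosing the right index set for the product potential: it must consist of sets each of which is guaranteed to be touched by step~\ref{detalg1} during an augmentation. Using $\CC_{OPT}$, together with the $\epsilon k$ deficit between the trigger $(1-\epsilon)k$ and the feasibility lower bound $k$ enforced on OPT, is precisely what delivers this.
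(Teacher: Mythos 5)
Your proposal is correct and follows essentially the same route as the paper: the identical potential $\Psi = \prod_{S \in \CC_{OPT}} w_S$, the same three bounds (initial value $(2m)^{-\alpha}$, upper bound from $w_S \leq 3/2$, and a per-augmentation multiplicative gain of $(1+1/(2k))^{\epsilon k}$ coming from the $\epsilon k$ sets of $\CC_{OPT} \cap \SS_j$ not yet in $\CC$). The only cosmetic difference is that you lower-bound the gain by $(9/4)^{\epsilon/2}$ where the paper uses $2^{\epsilon/2}$; both yield $O(\alpha \log m)$ for fixed $\epsilon$.
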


\begin{proof}
Consider the following potential function:
$$ \Psi = \prod_{S \in \CC_{OPT}} w_S$$
We now show three properties of $\Psi$:
\begin{itemize}
    \item The initial value of the potential function is:
    $(2m)^{-\alpha}$.
    \item The potential function never exceeds $1.5^{\alpha}$.
    \item In each weight augmentation step, the potential function
    is multiplied by at least $2^{\epsilon/2}$.
\end{itemize}
The first two properties follow directly from the initial value
and from the fact that no request gets a weight of more than
$1.5$. Consider an iteration in which the adversary gives an
element $j$ for the $k$th time. Now suppose that a weight
augmentation is performed for element $j$. We must have that
$cover_j < (1-\epsilon)k$, which means that the online algorithm
has covered element $j$ less than $(1-\epsilon)k$ times. The
optimal solution $OPT$ covers element $j$ at least $k$ times,
which means that there are at least $\epsilon k$ subsets of $OPT$
containing $j$ which were not chosen yet. Thus, in step
\ref{detalg1} of the algorithm the potential function is
multiplied by at least:
$$
(1+\frac{1}{2k})^{\epsilon k} \geq 2^{\epsilon / 2}
$$
It follows that for fixed $\epsilon>0$ the total number of weight
augmentation steps is at most:

$$
\frac{\log (3m)^\alpha}{\log 2^{\epsilon / 2}} = O(\alpha \log m)
$$
\end{proof}

\begin{lemma}\label{det_poten}
Consider an iteration in which a weight augmentation is performed.
Let $\Phi_s$ and $\Phi_e$ be the values of the potential function
$\Phi$ before and after the iteration, respectively. Then, there
exist at most $2 \log n$ sets that can be added to $\CC$ during
the iteration such that $\Phi_e \leq \Phi_s$. Furthermore, the
value of the potential function never exceeds $n^2$.
\end{lemma}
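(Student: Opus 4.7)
The plan is to prove both parts together by induction on the weight augmentations. For the bound $\Phi \leq n^2$: initially $w_j = |\SS_j|/(2m) \leq 1/2$ and $cover_j=0$, so $\Phi_0 \leq n \cdot n = n^2$; the first part of the lemma then preserves this at each iteration since $\Phi_e \leq \Phi_s \leq n^2$.

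For the first part, let $\phi_{j'} = n^{2(w_{j'}-cover_{j'})}$ denote the contribution of element $j'$ to $\Phi$. I would first control the growth of $\Phi$ in step \ref{detalg1} using the inductive invariant. Since $\Phi \leq n^2$ gives $w_{j'} - cover_{j'} \leq 1$ for every $j'$, and each $S \in \CC$ satisfies $w_S \geq 1$, we have $\sum_{S \in \SS_{j'} \setminus \CC} w_S = w_{j'} - \sum_{S \in \SS_{j'} \cap \CC} w_S \leq w_{j'} - cover_{j'} \leq 1$. Hence the increase $\Delta_{j'} = \sum_{S \in \SS_j \setminus \CC: j' \in S} w_S / (2k)$ is at most $1/(2k)$, so each $\phi_{j'}$ is multiplied by at most $n^{1/k}$ in step \ref{detalg1}, giving $\Phi_1 \leq \Phi_s \cdot n^{1/k}$. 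Step \ref{detalg2} only decreases $\Phi$.

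For step \ref{detalg3} my strategy is the natural greedy: at each step add the $S \in \SS_j \setminus \CC$ maximizing $\sum_{j' \in S} \phi_{j'}$, which is equivalent to minimizing the post-addition value of $\Phi$. To show $2 \log n$ additions suffice, I would analyze the randomized analog in which each $S \in \SS_j \setminus \CC$ is picked independently with probability $p_S = (\ln n / k)\, w_S$. With $b_{j'} = \sum_{S \in \SS_j \setminus \CC: j' \in S} w_S \leq 1$, a direct calculation gives $E[\phi_{j'}^{\mathrm{end}}] \leq \phi_{j'}^{\mathrm{new}} \prod_S (1 - p_S(1-1/n^2)) \leq \phi_{j'}^{\mathrm{new}} \exp(-b_{j'}\ln n / k) = \phi_{j'}^{\mathrm{old}}$, so $E[\Phi_e] \leq \Phi_s$, while the expected number of additions is $\sum_S p_S = b_j \ln n / k \leq \log n$. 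The method of conditional expectations applied to a joint pessimistic estimator for $\Phi$ and the addition count then yields a deterministic selection of at most $2 \log n$ sets achieving $\Phi_e \leq \Phi_s$.

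The hard part will be squeezing out the exact $2\log n$ count on the additions rather than $O(\log^2 n)$: a naive ``halve the excess'' recursion applied directly to the greedy bound $\Phi^{(t+1)} - \Phi_s \leq (1 - \Theta(k/\ln n))(\Phi^{(t)} - \Phi_s)$ only gives the latter. The trick is that the base of the exponent in $\Phi$ is chosen to equal $n$, which precisely aligns the expected reduction per random addition with the expected addition count, so that a single $\log n$ factor governs both quantities simultaneously in the pessimistic estimator.
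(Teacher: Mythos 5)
Your overall strategy -- randomized rounding proportional to the weight increments, followed by the method of conditional expectations -- is the same as the paper's, but several steps in your execution do not go through. First, your bound $\sum_{S \in \SS_{j'} \cap \CC} w_S \geq cover_{j'}$ (hence $b_{j'} \leq 1$ and the $n^{1/k}$ growth bound) relies on every set in $\CC$ having weight at least $1$; that is false, since step \ref{detalg3} adds sets to $\CC$ regardless of their weight. The correct and available bound is the weaker $\delta_{j'} \leq \delta_j \leq w_j/(2k) \leq 1/2$, obtained from $w_j - cover_j < 1$ (the inductive invariant) and $cover_j < (1-\epsilon)k$; this is what the paper uses, and it is enough. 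Second, your inclusion probabilities $p_S = (\ln n / k) w_S$ need not be valid probabilities: after step \ref{detalg2} one only knows $w_S < 1$, so for $k < \ln n$ you can have $p_S > 1$. Third, the chain $\prod_S (1 - p_S(1 - 1/n^2)) \leq \exp(-b_{j'} \ln n / k)$ is off by the factor $(1 - 1/n^2)$ in the exponent; what you actually get is $E[\phi_{j'}^{\mathrm{end}}] \leq \phi_{j'}^{\mathrm{old}} \exp(b_{j'} \ln n /(k n^2))$, strictly larger than $\phi_{j'}^{\mathrm{old}}$. Since the induction requires $\Phi_e \leq \Phi_s$ exactly at every one of the (possibly many) iterations, this slack is not harmless; you need to oversample to absorb it, which is precisely what the paper's factor of $2$ in the probability $2\delta_S$ and in the number of rounds accomplishes (leading to the comparison $n^{-4\delta_{j'}}$ versus the needed $n^{-2\delta_{j'}}$, settled by the elementary inequality $n^x + n^{-1} - n^{2x-1} \leq 1$ for $x \leq 0$).

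The most structural gap is the last step. With independent inclusions, the number of chosen sets is a random variable with mean about $\ln n$ but no deterministic upper bound, and you need a \emph{single} outcome satisfying both $\Phi_e \leq \Phi_s$ and ``at most $2\log n$ sets chosen.'' A joint pessimistic estimator of the form $\Phi_e/\Phi_s + X/(2\log n)$ has expectation about $3/2$, so conditional expectations only yields an outcome with $\Phi_e \leq \frac{3}{2}\Phi_s$ (or a similar loss), which destroys the invariant $\Phi \leq n^2$; and Markov/Chernoff on the two events separately does not give a nonzero probability for their intersection because $E[\Phi_e] \leq \Phi_s$ says nothing about $\Pr[\Phi_e > \Phi_s]$. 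The paper sidesteps this entirely by structuring the randomness as exactly $2\log n$ rounds, each choosing \emph{at most one} set (with probability $2\delta_S$ for set $S$, valid since $2\delta_j \leq 1$): the cardinality bound then holds deterministically, and only $\Phi$ requires the expectation argument. You would need to adopt that device, or something equivalent, to close the proof.
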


\begin{proof}
The proof is by induction on the iterations of the algorithm.
Initially, the value of the potential function $\Phi$ is less than
$n \cdot n = n^2$. Suppose that in the iteration the adversary
gives element $j$ for the $k$th time. For each set $S \in \SS_j$,
let $w_S$ and $w_S + \delta_S$ denote the weight of $S$ before and
after the iteration, respectively. Define $\delta_j = \sum_{S \in
\SS_j} \delta_S$. By the induction hypothesis, we know that $2(w_j
- cover_j) < 2$, because otherwise $\Phi_s$ would have been
greater than $n^2$. Thus, $w_j < cover_j + 1 \leq \lfloor
(1-\epsilon)k \rfloor + 1 \leq k$. This means that $\delta_j \leq
k \cdot 1/(2k) = 1/2$.

We now explain which sets from $\SS_j$ are added to $\CC$.

Repeat $2 \log n$ times: choose at most one set from $\SS_j$ such
that each set $S \in \SS_j$ is chosen with probability
$2\delta_S$. (This can be implemented by choosing a number
uniformly at random in [0,1], since $2\delta_j \leq 1$.)

Consider an element $j' \in X$. Let the weight of $j'$ before the
iteration be $w_{j'}$ and let the weight after the iteration be
$w_{j'}+\delta_{j'}$. Element $j'$ contributes before the
iteration to the potential function the value $n^{2w_{j'}}$. In
each random choice, the probability that we do not choose a set
containing element $j'$ is $1 - 2\delta_{j'}$. The probability
that this happens in all the $2 \log n$ random choices is
therefore $(1 - 2\delta_{j'})^{2 \log n} \leq n^{-4\delta_{j'}}$.

Note that $\delta_{j'} \leq 1/2$. In case we choose a set
containing element $j'$, then $cover_{j'}$ will increase by at
least $1$ and the contribution of element $j'$ to the potential
function will be at most $n^{2(w_{j'}+\delta_{j'}-1)} \leq
n^{2w_{j'}-1}$. Therefore, the expected contribution of element
$j'$ to the potential function after the iteration is at most

\begin{eqnarray*}
n^{-4\delta{j'}}n^{2(w_{j'}+\delta_{j'})}+(1-n^{-4\delta{j'}})n^{2w_{j'}-1}\\
=n^{2w_{j'}} ( n^{-2\delta{j'}} + n^{-1} - n^{-4\delta{j'}-1} )
\leq n^{2w_{j'}}
\end{eqnarray*}

where to justify the last inequality, we prove that $f(x)=n^x +
n^{-1} - n^{2x-1} \leq 1$ for every $x \leq 0$. To show this we
note that $f(0)=1$ and $f'(x)= n^x \log n (1 - 2n^{x-1})$. This
implies that $f'(x) \geq 0$ for every $x \leq 0$. We can conclude
that $f(x) \leq 1$ for every $x \leq 0$, as needed.

By linearity of expectation it follows that $\textbf{Exp}[\Phi_e]
\leq \Phi_s$. Hence, there exists a choice of at most $2 \log n$
sets such that $\Phi_e \leq \Phi_s$. The choices of the various
sets $S$ to be added to $\CC$ can be done deterministically and
efficiently, by the method of conditional probabilities, c.f.,
e.g., \cite{AS00}, chapter 15. After each weight augmentation, we
can greedily add sets to $\CC$ one by one, making sure that the
potential function will decrease as much as possible after each
such choice.
\end{proof}

\begin{theorem}\label{det_comp}
The deterministic algorithm for online set cover is $O(\log m \log
n)$-competitive.
\end{theorem}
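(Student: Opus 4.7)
The plan is to bound the cost $|\CC|$ by separately accounting for the two sources of additions to $\CC$: sets inserted in step~\ref{detalg2} (because their weight crossed $1$) and sets inserted in step~\ref{detalg3} (at most $2\log n$ per weight augmentation). First, I would invoke Lemma~\ref{det_steps} to bound the total number of weight augmentations by $O(\alpha \log m)$, where $\alpha = |\CC_{OPT}|$. Then Lemma~\ref{det_poten} immediately gives that the contribution of step~\ref{detalg3} is at most $2\log n$ sets per augmentation, for a total of $O(\alpha \log m \log n)$.

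The remaining piece is to bound how many sets enter $\CC$ via step~\ref{detalg2}. I would do a total-weight accounting argument. In the proof of Lemma~\ref{det_poten} it is already established that during a single weight augmentation for element $j$, the aggregate weight increase $\delta_j = \sum_{S \in \SS_j - \CC} \delta_S$ satisfies $\delta_j \leq 1/2$. Summing over all $O(\alpha \log m)$ weight augmentations, the total weight ever added across all sets is $O(\alpha \log m)$. Since every set entering via step~\ref{detalg2} has $w_S \geq 1$, and it started at $w_S = 1/(2m)$, each such set accounts for at least $1 - 1/(2m) > 1/2$ of the total weight growth. Hence the number of sets added in step~\ref{detalg2} is also $O(\alpha \log m)$, which is dominated by the step~\ref{detalg3} contribution.

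Combining the two bounds gives $|\CC| = O(\alpha \log m \log n)$. Finally, I would verify the coverage guarantee: by the structure of the while loop, after processing each request the algorithm ensures $cover_j \geq (1-\epsilon)k$ for the requested $j$; since $cover_j$ is monotone nondecreasing and the condition is rechecked on every subsequent appearance of $j$, the invariant $cover_j \geq (1-\epsilon)k$ is maintained at all times, where $k$ is the current number of times $j$ has been requested. This establishes both the competitive ratio and the bicriteria coverage property.

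The main obstacle I anticipate is the step~\ref{detalg2} bound: the Lemma statements as written focus on step~\ref{detalg3} (via the potential $\Phi$) and on the number of augmentations (via $\Psi$), so one must observe that the bound $\delta_j \leq 1/2$ implicit in the proof of Lemma~\ref{det_poten} can be repurposed to charge the step~\ref{detalg2} additions. The rest is a clean multiplication of the two logarithmic factors.
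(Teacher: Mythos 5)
Your proposal is correct and matches the paper's proof essentially step for step: the number of augmentations is bounded by Lemma~\ref{det_steps}, step~\ref{detalg3} contributes at most $2\log n$ sets per augmentation by Lemma~\ref{det_poten}, and the step~\ref{detalg2} sets are counted by the same total-weight accounting (initial total weight $1/2$, increase at most $1/2$ per augmentation, so only $O(\alpha\log m)$ sets can reach weight $1$). The paper phrases the last point as a bound on the final total weight rather than charging each crossing set $1-1/(2m)$ of the growth, but the argument is identical.
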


\begin{proof}
It follows from Lemma \ref{det_steps} that the number of
iterations is at most $O(\alpha \log m)$. By Lemma
\ref{det_poten}, in each iteration we choose at most $2 \log n$
sets to $\CC$ in step \ref{detalg3} of the algorithm. The sets
chosen is step \ref{detalg2} of the algorithm are those which have
weight at least $1$. The sum of weights of all the sets is
initially $1/2$ and it increases by at most $1/2$ in each weight
augmentation. This means that at the end of the algorithm, there
can be only $O(\alpha \log m)$ sets whose weight is at least $1$.
Therefore, the total number of sets chosen by the algorithm is as
claimed.
\end{proof}

\section{Concluding Remarks}\label{sec:conclude}

\begin{itemize}
    \item An interesting open problem is to decide if the algorithm presented here
    for the admission control problem can be derandomized.
    \item Recently, Feige and Korman established a lower bound of $\Omega(\log m \log n)$
    for the competitive ratio of any randomized polynomial time algorithm for the
    online set cover problem, under the $BPP \neq NP$ assumption. It is interesting to
    decide whether this lower bound applies for superpolynomial time algorithms as well.
    \item The algorithms we gave for the admission control problem did not use
    the fact that the requests are simple paths in the graph. All
    the algorithms treated a request as an arbitrary subset of
    edges.

\end{itemize}



\end{document}